 \newtheorem{theorem}{Theorem}
 \newenvironment{proof}{\begin{trivlist} \item[]{\em Proof.}}{\end{trivlist}}
\begin{document}

\singlespacing

%
\title{Signal periodic decomposition with conjugate subspaces}
%
%
%

\author{Shi-wen~Deng*,
        Ji-qing~Han*,~\IEEEmembership{Member,~IEEE,}
        \thanks{Manuscript received May X, 2016; revised XX XX, 2016.}

\IEEEcompsocitemizethanks{
\IEEEcompsocthanksitem *Deng~Shi-Wen is with School of Mathematical Sciences, Harbin Normal University, Harbin, China (e-mail: dengswen@gmail.com).\protect
\IEEEcompsocthanksitem *Han~Ji-Qing is with School of Computer Science and Technology, Harbin Institute of Technology, Harbin, China (e-mail: jqhan@hit.edu.cn).\protect\\
}
}

%
%

\markboth{IEEE transactions on signal processing,~Vol.~X, No.~X, XX~2016}%
{Shell \MakeLowercase{\textit{et al.}}: Bare Demo of IEEEtran.cls for Computer Society Journals}
%



\maketitle

\begin{abstract}
\boldmath

In this paper, we focus on hidden period identification and the periodic decomposition of signals.
Based on recent results on the Ramanujan subspace, we reveal the conjugate symmetry of the Ramanujan subspace with a set of complex exponential basis functions and represent the subspace as the union of a series of conjugate subspaces.
With these conjugate subspaces, the signal periodic model is introduced to characterize the periodic structure of a signal.
To achieve the decomposition of the proposed model, the conjugate subspace matching pursuit (CSMP) algorithm is proposed based on two different greedy strategies.
The CSMP is performed iteratively in two stages.
In the first stage, the dominant hidden period is chosen with the periodicity strategy.
Then, the dominant conjugate subspace is chosen with the energy strategy in the second stage.
Compared with the current state-of-the-art methods for hidden period identification, the main advantages provided by the CSMP are the following:
(i) the capability of identifying all the hidden periods in the range from $1$ to the maximum hidden period $Q$ of a signal of any length, without truncating the signal;
(ii) the ability to identify the time-varying hidden period with its shifted version; and
(iii) the low computational cost, without generating and using a large over-complete dictionary.
Moreover, we provide examples and applications to demonstrate the abilities of the proposed two-stage CSMP algorithm, which include hidden period identification, signal approximation, time-varying period detection, and pitch detection of speech.

\end{abstract}

\begin{IEEEkeywords}
Hidden period identification, Periodic decomposition, Conjugate subspace, Periodic signal mode, Ramanujan subspace.
\end{IEEEkeywords}


%
\IEEEpeerreviewmaketitle

\section{Introduction}

Period estimation (detection) or periodicity identification is a fundamental problem in the field of signal processing.
The periodicity of a signal refers to the fact that it repeats after a certain duration of time.
Specifically, the signal $x[n]$ has the period $q$ that is the \textbf{smallest} positive integer satisfying
$x[n+q]=x[n], \text{ } \forall n \in \mathbb{Z}$.
In the more complex case, the periodicity of the signal cannot be observed directly, and it is the sum of several periodic signals with different periods, which are referred to as the hidden periods.
However, traditional methods such as the discrete Fourier transform (DFT), periodigram and autocorrelation cannot effectively identify hidden periods in signals \cite{Sethares99, Pei2015, Vaidyanathan2014_1, Vaidyanathan2014_2}.
We first formulate the problem of identifying hidden periods as follows.

\textit{ Problem 1:
A signal $x[n]$ of length $N$ is generated by a sum of $L$ signals with periods $q_1, \cdots, q_L$, where $N \ll lcm(q_1, \cdots, q_L)$.
How do we identify the hidden periods $q_1, \cdots, q_L$ in the signal $x[n]$?
More generally, given any signal $x[n]$ of length $N$ and the maximum hidden period $Q$, can it be approximated as a sum of periodic components with periods $q \in [1, \cdots, Q]$ and the approximation error?
}

Recently, a variety of approaches have been proposed to address the above problem.
Considering the limitations of the DFT for period estimation, Epps \textit{et al.} \cite{Epps2008, Epps2009} extended traditional DFT and autocorrelation and to propose the integer DFT and hybrid methods for identifying the hidden periods.
Unfortunately, the integer DFT cannot characterize all of the periodic behaviors of the signals.
Sethares and Staley \cite{Sethares99} proposed the periodicity transform to extract the periodic components of a signal by constructing 'periodic subspaces' into which the signal is projected.
Due to the ambiguity of the definition of the period, the 'periodic subspaces' in \cite{Sethares99} cannot exactly characterize the sets of periodic signals, and hence, their periodic decomposition of a signal cannot correctly identify the hidden periods and depended on the order of extraction of the periodic components.
To eliminate the drawbacks in \cite{Sethares99}, Muresan and Parks \cite{Muresan2003} proposed exactly periodic subspace decomposition (EPSD) by generating a series of orthogonal periodic subspaces based on the calculation of the intersections of the 'periodic subspaces' in \cite{Sethares99}.
However, EPSD cannot achieve orthogonal decomposition for all the periodic components due to the finite length of the signal.
In fact, only the periodic components whose periods are divisors of the length of the signal can be orthogonal decomposed.
A detailed discussion of the above algorithms can be found in our recent work \cite{RSP2015}.

More recently, Ramanujan sums were applied to analyze the periodic behaviors of signals.
Planat \textit{et al.} \cite{Planat2002,Planat2009} characterized the periodicity of the $1/f$ signal with the Ramanujan Fourier Transform (RFT).
Unfortunately, the RFT suffers the drawback of shift sensitivity \cite{Vaidyanathan2014_2}.
To eliminate this drawback, Vaidyanathan \cite{Vaidyanathan2014_2} proposed the Ramanujan Periodic Transform (RPT) based on the Ramanujan subspaces introduced by the same author in \cite{Vaidyanathan2014_1}.
By projecting the signal into a series of mutually orthogonal Ramanujan subspaces associated with the hidden periods of the signal, the RPT extracts the corresponding periodic components, which are also mutually orthogonal.
For a signal of finite length $N$, however, the RPT only generates Ramanujan subspaces whose periods are divisors of $N$, which means that RPT can only identify the hidden periods satisfying this condition.
Tenneti and Vaidyanathan \cite{Tenneti2015P,Vaidyanathan2015E} proposed the Ramanujan Filter Bank (RFB) based on the Ramanujan subspaces to identify all the hidden periods of a signal.
Although the RFB can identify all the hidden periods, even those that change with time, the identification results of the RFB contain some false hidden periods due to the overlaps of the filter banks.
Moreover, Pei and Lu \cite{Pei2015} introduced the intrinsic integer-periodic function for identifying hidden periods based on the Ramanujan subspaces.

Some other approaches were proposed to identify the hidden periods based on the representation of a signal over the redundant dictionary.
Nakashizuka \textit{et al.} \cite{Nakashizuka2008} and Vaidyanathan \textit{et al.} \cite{Vaidyanathan2014C} proposed methods for the periodic decomposition of signals based on the framework of the spare representation of a signal.
In \cite{Vaidyanathan2014C}, the Farey dictionary is generated based on the Farey sequence, over which the sparse representation of the signal is obtained.
It was further extended to the Ramanujan dictionary-based approaches \cite{Tenneti2015-N,Vaidyanathan2015-M}.
However, one of the serious drawbacks of these approaches is that the dictionary will become too large when the expected maximum hidden period $Q$ is large, which means that a very large computational cost is required for the decomposition over the dictionary.
The dimension of the dictionary must be greater than $\sum_{1=1}^Q \phi(q)$, where $\phi(q)$ is the \textit{Euler's totient function} of $q$ and is also the dimension of the Ramanujan subspace associated with the period $q$.
For example, when $Q=512$, the number of columns of the dictionary is $79852$!.

In this paper, we present the method of signal periodic decomposition over conjugate subspaces based on the greedy strategy, named conjugate subspace matching pursuit (CSMP), which can efficiently and effectively solve \textit{Problem 1}.
The CSMP is a subspace pursuit method that was first introduced in \cite{Goodwin1999} to obtain a better representation of a signal in the time-frequency plane.
However, the CSMP proposed in this paper is completely different from the traditional matching pursuit algorithms \cite{Goodwin1999, Deng2001} used in signal decomposition in terms of constructing the dictionary and the greedy strategy, which are the key problems for the matching pursuit algorithm.
To identify the hidden periods and to perform the periodic decomposition of signals, the proposed method is based on the results for the Ramanujan subspace in \cite{Vaidyanathan2014_1} and our recent work \cite{RSP2015}.
First, we generate the Ramanujan subspace with the complex exponential basis from the frequency point of view and reveal that the Ramanujan subspace has conjugate symmetry structure.
Second, based on the symmetry structure, the Ramanujan subspace is represented as the union of a series of conjugate subspaces.
The union of the conjugate subspaces associated with all the hidden periods can be used as the dictionary for the periodic decomposition.
Third, unlike the dictionary-based method \cite{Vaidyanathan2014C,Tenneti2015-N} or a traditional matching pursuit algorithm such as \cite{Goodwin1999, Deng2001}, we perform a two-stage CSMP without constructing and using the whole dictionary.
In the first stage, the most dominant period of the signal in the current iteration is chosen based on the periodicity metric defined in \cite{RSP2015} without calculating the projection of the signal onto each Ramanujan subspace.
In the second stage,
the projections of the signal into the conjugate subspaces belonging to the Ramanujan subspace associated with the chosen period are calculated, and the component with the largest projection energy is removed from the current signal.
When the CSMP is stopped, the sums of the projections belonging to the same Ramanujan subspace, denoted $\mathcal{S}_q$, are the periodic components with the hidden period $q$.

Compared with the current state-of-the-art methods for hidden period identification, the main advantages provided by the CSMP are the following:
(i) the capability of identifying all the hidden periods in the range from $1$ to the maximum period $Q$ of a signal of any length, without truncating the signal or the dictionary;
(ii) the ability to identify the time-varying hidden period with its shifted version; and
(iii) the low computational cost, without generating and using the large over-complete dictionary.

%
%
%
%
%
%
%

The paper is organized as follows.
We begin with a brief review of some necessary concepts and results of the Ramanujan subspace in Section \ref{SEC:RS}.
In Section \ref{SEC:CS}, we redefine the Ramanujan subspace with the complex exponential basis, reveal its complex conjugate symmetry, and provide the representation of the Ramanujan subspace with its conjugate subspaces.
The general model for the periodic decomposition based on the conjugate subspaces is introduced in Section \ref{SEC:SPCS}.
Section \ref{SEC:ALG} presents the CSMP algorithm to perform the signal periodic decomposition.
With the proposed method, some examples and real applications are provided in Section \ref{SEC:EA}.
We provide conclusions in Section \ref{SEC:CON}.

\section{Ramanujan subspace}
\label{SEC:RS}

In this section, we briefly review some necessary concepts and results of the Ramanujan subspace, which first appeared in \cite{Vaidyanathan2014_1}.
The Ramanujan subspace is constructed based on the Ramanujan sums.
For any positive integer $q$, the Ramanujan sums is a sequence with period $q$ and is defined as follows
\begin{align*}
	c_q(n)=\underset{(k,q)=1}{\sum_{k=1}^q} e^{j2 \pi kn/q}, \text{ for } n= \cdots, -1, 0, 1, \cdots
\end{align*}
where $(k,q)$ denotes the greatest common divisor (gcd) of $k$ and $q$ and $(k,q)=1$ means that $k$ and $q$ are coprime.
Then, the Ramanujan subspace $\mathcal{S}_q$ is defined by the column space of the following integer circulant matrix $\mathbf{B}_q$
\begin{align*}
        \small
	\mathbf{B}_q = \left[
		\begin{array}{cccc}
			c_q(0) & c_q(q-1)&  \cdots& c_q(1) \\
			c_q(1) & c_q(0)&  \cdots& c_q(2) \\
			c_q(2) & c_q(1)&  \cdots& c_q(3) \\
            \vdots & \vdots &  \ddots & \vdots \\
			c_q(q-2) & c_q(q-3)&  \cdots& c_q(q-1) \\
			c_q(q-1) & c_q(q-2)&  \cdots& c_q(0)
		\end{array}
	\right]
\end{align*}
where $c_q(\cdot)$ is the Ramanujan sums.

According to \cite{Vaidyanathan2014_1,Vaidyanathan2014_2,Pei2015},
the Ramanujan subspace $\mathcal{S}_q$ is capable of characterizing the periodic component of period $q$ of the signal.
Moreover, the relationship between the Ramanujan subspace and the DFT matrix proposed in \cite{Vaidyanathan2014_1} is summarized as follows.
\begin{theorem} \label{THEOREM:RS}
	The Ramanujan subspace $\mathcal{S}_q \subset \mathbb{C}^q$ is identical to the space spanned by those $\phi(q)$ columns of the $q \times q$ DFT matrix, whose column indices $k$ are coprime with $q$.
\end{theorem}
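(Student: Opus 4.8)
The plan is to exploit the circulant structure of $\mathbf{B}_q$ together with the classical fact that every circulant matrix is diagonalized by the DFT matrix. First I would observe that the $(i,j)$ entry of $\mathbf{B}_q$ equals $c_q((i-j) \bmod q)$, so $\mathbf{B}_q$ is a circulant matrix whose generating sequence is the Ramanujan sum $c_q(\cdot)$. Consequently its eigenvectors are precisely the $q$ columns of the DFT matrix, namely the vectors $\mathbf{v}_k$ with entries $(\mathbf{v}_k)_n = e^{j2\pi kn/q}$ for $k=0,1,\ldots,q-1$, and the associated eigenvalue is the DFT of the first column,
\begin{align*}
	\lambda_k = \sum_{n=0}^{q-1} c_q(n)\, e^{-j2\pi kn/q}.
\end{align*}

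The crux of the argument is to evaluate these eigenvalues and show that $\lambda_k$ is nonzero exactly when $k$ is coprime with $q$. Substituting the definition of the Ramanujan sum and interchanging the two finite sums gives $\lambda_k = \sum_{(m,q)=1}\sum_{n=0}^{q-1} e^{j2\pi(m-k)n/q}$. The inner geometric sum is the standard orthogonality relation for roots of unity: it equals $q$ when $m \equiv k \pmod{q}$ and $0$ otherwise. Hence $\lambda_k = q$ if some coprime residue $m$ satisfies $m \equiv k$, that is if $(k,q)=1$, and $\lambda_k = 0$ otherwise. Since there are exactly $\phi(q)$ indices $k \in \{0,\ldots,q-1\}$ coprime with $q$, precisely $\phi(q)$ of the eigenvalues are nonzero.

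Finally I would conclude from the eigen-decomposition that the column space of $\mathbf{B}_q$ is the span of the eigenvectors associated with the nonzero eigenvalues. Because the DFT columns $\mathbf{v}_k$ are linearly independent, the range of $\mathbf{B}_q$ is exactly $\mathrm{span}\{\mathbf{v}_k : (k,q)=1\}$, which is the space spanned by the $\phi(q)$ columns of the DFT matrix whose indices are coprime with $q$. This identifies $\mathcal{S}_q$ with the claimed subspace.

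I expect the only real obstacle to be bookkeeping rather than anything conceptual: one must keep the sign and normalization conventions of the DFT consistent between the diagonalization formula and the definition of $c_q(n)$, and verify that reducing $m$ modulo $q$ never disturbs the coprimality condition $(m,q)=1$. Both points are routine, so the proof reduces almost entirely to recognizing the circulant structure of $\mathbf{B}_q$ and applying the orthogonality of complex exponentials.
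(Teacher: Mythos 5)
Your proof is correct: $\mathbf{B}_q$ is indeed circulant with $(i,j)$ entry $c_q((i-j)\bmod q)$, your eigenvalue computation correctly yields $\lambda_k=q$ when $(k,q)=1$ and $\lambda_k=0$ otherwise (the edge case $k=0$ works out, since $(0,q)=q$ is coprime with $q$ only for $q=1$, matching $c_1(n)\equiv 1$), and the range of a diagonalizable matrix is precisely the span of the eigenvectors with nonzero eigenvalues, with the sign convention of the DFT immaterial because the set $\{v_k:(k,q)=1\}$ is closed under complex conjugation. Note that this paper states the theorem without proof, importing it from \cite{Vaidyanathan2014_1}, and the argument given there is essentially the same circulant/DFT diagonalization you use---computing the DFT of the Ramanujan-sum sequence and observing it is supported exactly on the indices coprime with $q$---so your approach coincides with the source proof.
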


Note that when $N$ is the integer multiple of $q$, say $N=qM$, the Ramanujan subspace $\mathcal{S}_{qM} \subset \mathbb{C}^N$ is also denoted as $\mathcal{S}_q$.
Therefore, the Ramanujan subspace $\mathcal{S}_q$ can characterize the periodic behavior of the signal whose length is the integer multiple of the period $q$.
The definition of the Ramanujan subspace will be extended in the following section.


\section{Conjugate subspaces of the Ramanujan subspace}
\label{SEC:CS}

In this section, we first redefine the Ramanujan subspace with a set of complex exponential basis functions and reveal its complex conjugate symmetry structure.
Then, the conjugate subspace of the Ramanujan subspace are defined by a pair complex exponential basis functions based on their complex conjugate symmetry.
Next, the Ramanujan subspace is represented with a series of conjugate subspaces.

\subsection{Complex conjugate symmetry of the Ramanujan subspace}

The Ramanujan subspace $\mathcal{S}_q$ of the period $q$ is a linear subspace with dimension $\phi(q)$.
According to Theorem \ref{THEOREM:RS}, from the frequency point of view, it consists of $\phi(q)$ center frequencies, which are
\begin{align} \label{Eq:FreqOmega}
    \omega_{q,i}=2\pi\frac{k_i}{q}, \text{  } i=1,\cdots,\phi(q)
\end{align}
where the positive integer $k_i$ is coprime to $q$ and satisfies $1 \leq k_i<k_{i+1} < q$.

Instead of using the integer basis based on the Ramanujan sums $c_q[n]$, the Ramanujan subspace $\mathcal{S}_q$ can be defined based on the complex exponential functions
\begin{align} \label{Eq:CFBasis}
    g(\omega_{q,i})=R_{N,q,i}e^{jn\omega_{q,i}}, \text{  } n=0,\cdots,N-1
\end{align}
where $R_{N,q,i}$ is a constant associated with the frequency $\omega_{q,i}$ to obtain the unit-norm function satisfying $\|g(\omega_{q,i})\|=1$, $n$ is the time (or sample) index, and $N$ is the signal length.
Thus, the Ramanujan subspace $\mathcal{S}_q$ can be redefined by the set of basis functions
$\{ g(\omega_{q,i})\}_{i=1}^{\phi(q)}$, that is,
\begin{align} \label{EQ:SqCE}
    \mathcal{S}_q \triangleq \text{span} \left \{ g(\omega_{q,i}) \right\}, \text{  } i=1,\cdots,\phi(q)
\end{align}
Instead of restricting the signal length, $N$ must be an integer multiple of the period $q$, and the Ramanujan subspace in (\ref{EQ:SqCE}) can be applied to signals of any length, which is an extension of the traditional definition of the Ramanujan subspace.
It is worthwhile mentioning that this extension is achieved at the expense of losing the orthogonality of the basis $\{ g(\omega_{q,i})\}_{i=1}^{\phi(q)}$.

As the basis $\{g(\omega_{q,i})\}_{i=1}^{\phi(q)}$ corresponds to the frequencies $\{\omega_{q,i}\}_{i=1}^{\phi(q)}$, the Ramanujan subspace $\mathcal{S}_q$ has a complex conjugate symmetry structure.
For example, when $q=9$, the frequencies contained in $\mathcal{S}_q$ are $\{ 2\pi \frac{1}{9},2\pi\frac{2}{9},2\pi\frac{4}{9}, 2\pi\frac{5}{9},2\pi\frac{7}{9},2\pi\frac{8}{9}\}$.
Fig. \ref{FIG:CP} shows these frequencies distributed in an unit circle with a maximum frequency of $2 \pi$.
Obviously, these frequencies
$2\pi\frac{1}{9}$ and $2\pi\frac{8}{9}$,
$2\pi\frac{2}{9}$ and $2\pi\frac{7}{9}$,
and $2\pi\frac{4}{9}$ and $2\pi\frac{5}{9}$
are symmetric about the horizontal axis.
The symmetry among these frequencies implies that the corresponding complex exponential functions $\{g(\omega_{q,i})\}_{i=1}^{\phi(q)}$ also have complex conjugate symmetry structure.
The following theorem indicates that the Ramanujan subspace $\mathcal{S}_q$ defined by the complex exponential basis has complex conjugate symmetry structure.

\begin{figure}[h]
        \begin{minipage}[h]{1.0\linewidth}
          \centering
          \centerline{\includegraphics[width=7.5cm]{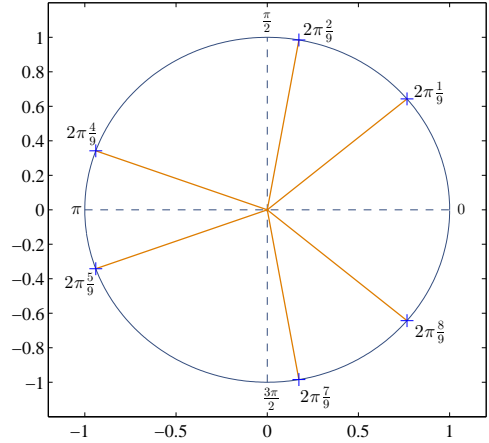}}
        \end{minipage}
        \caption{The frequencies are contained in the Ramanujan subspace $\mathcal{S}_9$.}
        \label{FIG:CP}
\end{figure}

\begin{theorem} \label{THEOREM:CONJ}
    Given the period $q$, the Ramanujan subspace $\mathcal{S}_q$ represented by the complex exponential basis $\{ g(\omega_{q, i})\}_{i=1}^{\phi(q)}$ has complex conjugate symmetry, satisfying the following:
    \begin{align}
        \overline{g_q(\omega_{q,i})}&=g_q(\omega_{q,\phi(q)-i}), \text{ if } q \geq 3 \\
        \overline{g_q(\omega_{q,i})}&=g_q(\omega_{q,i}),  \text{ if }  1 \leq q \leq 2
    \end{align}
    where the frequency $\omega_{q,\phi(q)-i}=2\pi\frac{q-k_i}{q}$.
    The complex exponential function $g_q(\omega_{q,i})$ and its complex conjugate $\overline{g_q(\omega_{q,i})}$ are referred to as a complex conjugate pair.
\end{theorem}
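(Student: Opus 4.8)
The plan is to compute the complex conjugate of an arbitrary basis function directly from its definition (\ref{Eq:CFBasis}) and to show that the result is again one of the basis functions spanning $\mathcal{S}_q$, so that conjugation merely permutes the generating set. First I would write $\overline{g_q(\omega_{q,i})}=\overline{R_{N,q,i}}\,e^{-jn\omega_{q,i}}$ and exploit the fact that the sample index $n$ ranges over integers, so that $e^{jn2\pi}=1$ and hence $e^{-jn\omega_{q,i}}=e^{jn(2\pi-\omega_{q,i})}$. Substituting $\omega_{q,i}=2\pi k_i/q$ from (\ref{Eq:FreqOmega}) turns the conjugated frequency into $2\pi-\omega_{q,i}=2\pi(q-k_i)/q$, which is exactly the candidate frequency named in the statement.

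The key number-theoretic step is to verify that this conjugated frequency genuinely belongs to $\mathcal{S}_q$, i.e. that $q-k_i$ is again a legitimate index. This is immediate from $\gcd(q-k_i,q)=\gcd(k_i,q)=1$, so $q-k_i$ is coprime to $q$ and lies in $\{1,\dots,q-1\}$; therefore $2\pi(q-k_i)/q$ is one of the $\phi(q)$ center frequencies of $\mathcal{S}_q$, and by Theorem \ref{THEOREM:RS} its exponential is a basis function of $\mathcal{S}_q$. To identify which index this is, I would observe that the involution $k\mapsto q-k$ reverses the ordering $k_1<\cdots<k_{\phi(q)}$, so $q-k_i$ occupies the complementary position in the ordered list, namely the index written $\phi(q)-i$ in the statement.

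Next I would dispatch the normalization constants. Since $\|e^{jn\omega}\|^2=\sum_{n=0}^{N-1}1=N$ is independent of $\omega$, the unit-norm condition $\|g_q(\omega_{q,i})\|=1$ forces $|R_{N,q,i}|=1/\sqrt{N}$ for every $i$; with the natural choice of a positive real normalization, each $R_{N,q,i}$ equals $1/\sqrt{N}$ and is in particular real and the same for all $i$, so $\overline{R_{N,q,i}}=R_{N,q,\phi(q)-i}$. Combining the three ingredients gives $\overline{g_q(\omega_{q,i})}=R_{N,q,\phi(q)-i}\,e^{jn\omega_{q,\phi(q)-i}}=g_q(\omega_{q,\phi(q)-i})$ with $\omega_{q,\phi(q)-i}=2\pi(q-k_i)/q$, which is the claimed identity for $q\ge 3$.

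Finally, the degenerate cases $q\in\{1,2\}$ must be treated separately: a basis function is self-conjugate precisely when $\omega_{q,i}\equiv 2\pi-\omega_{q,i}$, i.e. $\omega_{q,i}\in\{0,\pi\}$, and coprimality forces this only for $q=1$ (frequency $0$, equivalently $2\pi$) and $q=2$ (frequency $\pi$), where $q-k_i=k_i$ and the function is real, yielding $\overline{g_q(\omega_{q,i})}=g_q(\omega_{q,i})$. The argument is otherwise entirely routine, and I expect the only delicate point to be combinatorial rather than analytic: making the index identification $q-k_i\leftrightarrow\phi(q)-i$ precise under the chosen ordering convention and confirming, via coprimality, that the involution $k\mapsto q-k$ is fixed-point-free for $q\ge 3$ so that the conjugate pairs are genuinely distinct.
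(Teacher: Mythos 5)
Your proposal is correct and takes essentially the same route as the paper's own proof: conjugate the exponential, use $e^{j2\pi n}=1$ to rewrite $e^{-jn\omega_{q,i}}$ as $e^{jn\,2\pi(q-k_i)/q}$, invoke $\gcd(q-k_i,q)=\gcd(k_i,q)=1$ to recognize this as another admissible center frequency, and treat $q\le 2$ separately via the realness of the single basis vector. Your added care with the normalization constants ($R_{N,q,i}=1/\sqrt{N}$ for every $i$, so conjugation fixes them) and with the fixed-point-freeness of $k\mapsto q-k$ for $q\ge 3$ tightens steps the paper leaves implicit; the only caveat is the index label itself, since under the stated ordering the reversal sends $k_i$ to $k_{\phi(q)+1-i}$ rather than $k_{\phi(q)-i}$, an off-by-one that originates in the theorem statement and appears in the paper's proof as well, and which you harmlessly absorb by reading $\omega_{q,\phi(q)-i}=2\pi\frac{q-k_i}{q}$ as the definition of that symbol.
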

\begin{proof}

    If $q \leq 2$, then $\phi(q)=1$.
    For this case, there exists only one frequency component  in $\mathcal{S}_q$, and $g_q(\omega_{q,1})$ is a real function (or vector); hence, $\overline{g_q(\omega_{q,1})}=g_q(\omega_{q,1})$.
    Specifically, $\omega_{1,1}=2\pi$ and $\omega_{2,1}=\pi$, for $q=1$ and $q=2$.

    If $q>2$, then $\phi(q)$ is even and $\phi(q)\geq 2$.
    Note that if $(k_i,q)=1$, it follows that $(q-k_i,q)=1$ as well.
    For the frequency component $\omega_{q, i}$ contained in $\mathcal{S}_q$, we have
    \begin{align*}
	   \overline{e^{j n \omega_{q,i}}} &= e^{-j n \omega_{q,i}} \\
                                    &= cos( n \omega_{q,i}) - j sin( n \omega_{q,i}) \\
                                    &= cos(2\pi n- n \omega_{q,i}) + j sin(2\pi n- n \omega_{q,i}) \\
                                    &= e^{j2 \pi n \frac{q-k_i}{q}} \\
                                    &= e^{j 2\pi n \omega_{q,{\phi(q)-i}}}
    \end{align*}
    where $k_{\phi(q)-i}=q-k_i$.
   Hence,
    \begin{align*}
        \overline{g_q(\omega_{q,i})}=g_q(\omega_{q,{\phi(q)-i}})
    \end{align*}
    is proven.
\end{proof}

Theorem \ref{THEOREM:CONJ} reveals an important property of the Ramanujan subspace $\mathcal{S}_q$ when it is defined by a set of complex exponential basis.
This means that the Ramanujan subspace $\mathcal{S}_q$ can be represented by a series of complex conjugate subspaces.

\subsection{Representation of the Ramanujan subspace with conjugate subspaces}

With Theorem \ref{THEOREM:CONJ}, we know that the complex exponential basis of the Ramanujan subspace $\mathcal{S}_q$ contains a series of complex conjugate pairs.
Let $g(\omega_{q,i})$ and $\overline{g(\omega_{q,i})}$ be a complex conjugate pair.
The subspace $\mathcal{G}_{q,i}$ is referred to as the complex conjugate subspace (CCS) of $\mathcal{S}_q$, defined by
\begin{align} \label{EQ:G}
	\mathcal{G}_{q,i}=\text{span} \left\{ g(\omega_{q,i}), \overline{g(\omega_{q,i})} \right\},
\end{align}
which is completely determined by $g(\omega_{q,i})$ and its  complex conjugate $\overline{g(\omega_{q,i})}$.

The complex conjugate pair is not mutually orthogonal when the length $N$ is not an integer multiple of the period $q$.
Let $|c(q,i)| \in [0,1)$ denote the absolute correlation coefficient between $g(\omega_{q,i})$ and $\overline{g(\omega_{q,i})}$, where $c(q,i) \triangleq \langle g(\omega_{q,i}),\overline{g(\omega_{q,i})} \rangle$.
Fig. \ref{FIG:CCCP} shows the correlation coefficients of these complex pairs of the basis of the Ramanujan subspace $\mathcal{S}_9$ as functions of the length $N$.
The correlation coefficient satisfies $|c(q,i)|=0$ only when the length $N$ is an integer multiple of the period $9$.
All of the values of these correlation coefficients decrease when the length $N$ increases, which means that the orthogonality of the conjugate pair increases.

\begin{figure}[t]
        \begin{minipage}[h]{1.0\linewidth}
          \centering
          \centerline{\includegraphics[width=9.5cm]{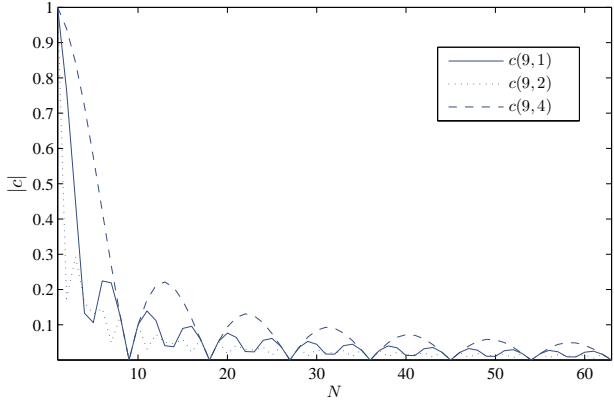}}
        \end{minipage}
        \caption{The correlation coefficients of the conjugate pairs of $\mathcal{S}_9$ as the functions of length $N$.}
        \label{FIG:CCCP}
\end{figure}

According to Theorem \ref{THEOREM:CONJ},
there are $M_q$ conjugate pairs in the complex exponential basis of $\mathcal{S}_q$, where
\begin{align}\label{EQ:Mq}
	M_q = \left\{
		\begin{array}{rl}
			1, &\text{  if } q \leq 2 \\
			\phi(q)/2, &\text{  if } q \geq 3
		\end{array}
	\right.
\end{align}
Thus, the Ramanujan subspace $\mathcal{S}_q$ contains $M_q$ CCSs corresponding to these complex conjugate pairs and can be represented as the union of these CCSs as follows
\begin{align} \label{EQ:RSandCS}
	\mathcal{S}_q = \bigcup_{i=1}^{M_q}  \mathcal{G}_{q,i}
\end{align}
where the symbol ``$\bigcup$'' denotes the union of these CCSs.
Specially, the Ramanujan subspaces $\mathcal{S}_q \subseteq \mathcal{R}^N$ can be represented as the direct sum ``$\bigoplus$'' of these CCSs,
as they are mutually orthogonal when $N$ is an integer multiple of $q$.
In general, when $N$ is not an integer multiple of $q$, these CCSs are not mutually orthogonal, and the intersection of the different CCSs contains nonzero elements.

By representing the Ramanujan subspace $\mathcal{S}_q$ with the CCSs in (\ref{EQ:RSandCS}), the periodic component $x_q \in \mathcal{S}_q$ can be represented with these CCSs.
We will construct this representation in the following section.

\section{Signal periodic model with conjugate subspaces}
\label{SEC:SPCS}

In this section, we first introduce the signal periodic model for formulating the aforementioned \textit{Problem 1} and reformulate the model with CCSs.
Then, the optimal problem for solving the signal periodic model is presented.
With the proposed model, the periodic structure of a signal is explored.

We formulate the aforementioned \textit{Problem 1} as the following signal periodic model.
Given the maximum hidden period $Q$, we assume that the signal $x \in \mathbb{R}^N$ can be represented as
\begin{align}\label{EQ:PeriodRe}
	x = \sum_{q=1}^Q x_q + r
\end{align}
where $x_q$ is the periodic component with the hidden period $q$, which is the projection of the signal $x$ onto the Ramanujan subspace $\mathcal{S}_q$, and $r$ is the residual error.
We refer to the representation in (\ref{EQ:PeriodRe}) as the signal periodic model or signal periodic decomposition.
With the signal periodic model in (\ref{EQ:PeriodRe}), the strengths of the hidden periodic components of the signal $x$ can be easily observed by their energies $\|x_q\|^2$, for $q=1,\cdots,Q$.
Similar to the energy spectrum defined in DFT, we refer to the strengths of these hidden periods as the periodic spectrum of the signal $x$, demonstrated using an example in Fig. \ref{FIG:IdenHidPs} (b).
Next, we will further represent the signal periodic model with the CCSs of the Ramanujan subspace corresponding to the hidden periods $1,\cdots,Q$.
Before doing so, we present the orthogonal projection of a signal into the CCS below.

Let $\mathcal{G}$ be the CCS spanned by the complex conjugate pair $\{g,\bar{g}\}$, that is, $\mathcal{G}=\text{span}\{g,\bar{g}\}$.
Because a real-world signal is real valued, we only consider the orthogonal projection of a real-valued signal in the CCS.
The orthogonal projection of the signal $x \in \mathbb{R}^N$ onto $\mathcal{G}$ can be represented by
\begin{align} \label{EQ:PCinCS}
    x_{\mathcal{G}} = 2 \text{Re}(\alpha g)
\end{align}
where $\text{Re}(z)$ represents the real part of the complex vector $z \in \mathbb{C}^N$.
The complex coefficient $\alpha \in \mathbb{C}$ is defined by
\begin{align} \label{EQ:ALPHA}
    \alpha \triangleq& \langle x, g \rangle_{*} \\
           =& \frac{1}{1-|c|^2}(\langle g,x \rangle - c \langle \bar{g},x \rangle)
\end{align}
where $c = \langle g,\bar{g} \rangle$ and $|c| \in [0,1)$ is the correlation coefficient between $g$ and its complex conjugate $\bar{g}$.
Here, we introduce the symbol $\langle x,g \rangle_*$ to denote the complex coefficient of the projection of signal $x$ onto the CCS $\mathcal{G}$.
Importantly, the projection $x_{\mathcal{G}}$ of $x$ onto the CCS $\mathcal{G}$ in (\ref{EQ:PCinCS}) can be represented by a single complex coefficient $\alpha$, which satisfies the following:
\begin{align}
    \|x_{\mathcal{G}}\|^2 = 2|\langle x,g \rangle_*| = 2|\alpha|^2
\end{align}
More details of the projection into the CCS are given in Appendix \ref{SEG:APP-A}.

The signal periodic model based on the CCSs is considered here.
With (\ref{EQ:RSandCS}) and (\ref{EQ:PCinCS}), the periodic component $x_q \in \mathcal{S}_q$ can be represented by the orthogonal projections of $x_q$ onto the CCSs $\{\mathcal{G}_{q,i}\}_{i=1}^{M_q}$ of $\mathcal{S}_q$ as follows
\begin{align} \label{EQ:PCCS}
	x_q=2\sum_{i=1}^{M_q} \text{Re}( \alpha_{q,i} g(\omega_{q,i}) )
\end{align}
where $\alpha_{q,i}$ is the orthogonal projection coefficient of $x_q$ onto the conjugate subspace $\mathcal{G}_{q,i}$.
(\ref{EQ:PCCS}) provides a way of representing the periodic component of a signal based on the CCSs of the corresponding Ramanujan subspace.
By substituting (\ref{EQ:PCCS}) into (\ref{EQ:PeriodRe}), the signal periodic model in (\ref{EQ:PeriodRe}) can be represented as
\begin{align} \label{EQ:SigCS}
	x = 2\sum_{q=1}^Q	\sum_{i=1}^{M_q}  \text{Re}\left(\alpha_{q,i}g(\omega_{q,i})\right) + r
\end{align}
 based on all the CCSs of the Ramanujan subspaces $\{ \mathcal{S}_q \}_{q=1}^Q$.

The signal periodic model in (\ref{EQ:SigCS}) reveals the intrinsic periodic structure of the signal through a linear combination of its hidden periodic components, but it cannot be directly achieved in a linear manner.
Let the complex matrix $\mathbf{D} \in \mathbb{C}^{N \times M} $ denotes the union of the complex exponential functions $\{ g(\omega_{q,i}) \}$ in (\ref{EQ:SigCS}), where $M$ is the total number of the CCSs and
\begin{align}
M =\sum_{q=1}^Q {M_q} = 2+\sum_{q=3}^Q\phi(q)/2
\end{align}
The complex vector $\pmb{\alpha} \in \mathbb{C}^M$ is stacked from the complex coefficients $\{\alpha_{q,i}\}$ using the rule
\begin{align}\label{EQ:MAPIdX}
    m=\sum_{p=1}^{q-1}M_p + i
\end{align}
where $m \in [1,M]$, $q \in [1, Q]$, and $i \in [1, M_q]$.
Thus, the signal periodic model in (\ref{EQ:SigCS}) can be written in matrix form as
\begin{align} \label{EQ:MDMATR}
	x = 2 \text{Re}(\mathbf{D}^T \pmb{\alpha}) + r
\end{align}
where the superscript $T$ denotes the matrix transposition (not the conjugate transposition).
Usually, for a finite-duration signal $x$ of length $N$, we have $M \gg N$.
This implies that $\mathbf{D}$ is very redundant and that there exist infinite solutions of the signal periodic model.
Therefore, we expect to find a sparse solution to the model in (\ref{EQ:MDMATR}) by solving the following optimal problem
\begin{align} \label{EQ:MPProb}
    \begin{array}{lr}
        \underset{\pmb{\alpha \in} \mathbb{C}^M}{\text{min }} \|\pmb{\alpha}\|_0 \\
        \text{ subject to } \|x - 2 \text{Re}(\mathbf{D}^T \pmb{\alpha})\|_2 \leq \epsilon
    \end{array}
\end{align}
where $\epsilon > 0$ is the error tolerance for the linear approximation.

We will present the CSMP algorithm to achieve the decomposition of the signal periodic model in the next section.
It is worthwhile mentioning that the over-complete dictionary $\mathbf{D}$ is never constructed and used in our algorithm.

\section{Conjugate subspace matching pursuit algorithm}
\label{SEC:ALG}

In this section, we present a two-stage CSMP algorithm for the decomposition of the signal periodic model.
The CSMP is iteratively performed using some greedy strategies.
Instead of using the common greedy strategies, the periodicity strategy and the energy strategy are proposed and used in the two stages of the CSMP in Subsection \ref{SEC:GS}.
In the first stage, the dominant hidden period is chosen using a periodicity strategy.
The dominant CCS is chosen using an energy strategy in the second stage.
We provide the details of the two-stage CSMP algorithm in Subsection \ref{SEC:TWCSMP}.

\subsection{Greedy strategies}
\label{SEC:GS}

In each iteration, the CSMP is to choose a suitable CCS based on a certain greedy strategy.
However, the commonly used strategies in MP \cite{Mallat1993,Deng2001,Goodwin1999} or the subspace pursuit \cite{Dai2009} algorithms are unsuitable for our CSMP algorithm.
With these strategies, the CSMP needs to calculate the projections of the current residual signal in all the subspaces spanned by each atom of the dictionary.
This means that a high computational cost is required to solve the optimal problem in (\ref{EQ:MPProb}), as the size of the dictionary $\mathbf{D}$ may be very large when the maximum hidden period $Q$ is large.
For example, when $Q=512$, the number of columns of $\mathbf{D}$ is $39927$, which is almost half the size of the Farey dictionary \cite{Vaidyanathan2014C} or nested periodic matrices used in \cite{Tenneti2015-N}.
On the other hand, the CCS chosen in each iteration for the CSMP needs to capture the dominant periodicity of the current residual signal rather than just match the component in the residual signal with the dictionary atom well.

We propose to perform the CSMP in two stages with different strategies, the periodicity strategy and energy strategy.
With the periodicity strategy in the first stage, the dominant hidden period is chosen according to a certain periodicity metric.
With the energy strategy in the second stage, the CCS is subsequently selected from the Ramanujan subspace corresponding to the chosen hidden period, where the current signal can obtain the maximum projection energy.

For the periodicity strategy in the first stage, the periodicity metric first introduced in our recent work \cite{RSP2015} is used to choose the dominant hidden period, which is defined as
\begin{align} \label{EQ:PMxq}
  	P({x}_q,q) \triangleq \frac{N+q}{2q} \|{x}_q\|^2
\end{align}
where $N$ is the signal length, $q$ is the measured hidden period, and $x_q$ is the hidden periodic component that is the projection of the signal $x$ onto the Ramanujan subspace $\mathcal{S}_q$.
Instead of directly calculate the projection energy $\|x_q\|^2$ of the hidden periodic component by projecting $x$ into the Ramanujan subspace $\mathcal{S}_q$, \cite{RSP2015} shows that $\|x_q\|^2$ can be iteratively calculated using
\begin{align}\label{EQ:ENEPSD}
	\| {x}_q \|^2 = \|\hat{x}_q\|^2-\sum_{p \in \Gamma} \|{x}_{p}\|^2
\end{align}
where $\Gamma$ is the set of all the divisors of $q$ expect for $q$ itself and $\|\hat{x}_q\|^2$ is the estimate of the periodic energy.
According to \cite{Wise1976,Muresan2003, RSP2015}, $\|\hat{x}_q\|^2$ can be estimated by
\begin{align}\label{EQ:EMLE}
	\|\hat{{x}}_q\|^2=\frac{q}{N}\left( \varphi_{{x}}(0)+2\sum_{l=1}^{M-1} \varphi_{{x}}(lq) \right)
\end{align}
where $\varphi_{x}(\cdot)$ is the autocorrelation function of $x$, $M=\lfloor N/q\rfloor$, and $\|\hat{x}_1\|^2=\|x_1\|^2$.
Thus, the dominant hidden period $q^*$ can be chosen by
\begin{align}
	q^* = \underset{q \in [1, Q]}{\text{ argmin }} P(x_q,q)
\end{align}
where $Q$ is the maximum period.
Importantly, it is unnecessary to calculate the periodicity metric in (\ref{EQ:PMxq}) for the hidden periods by projecting the signal onto each Ramanujan subspace $\{\mathcal{S}_1\}_{q=1}^Q$.
This greatly reduces the computational cost of the CSMP.

With the chosen hidden period $q^*$ in the first stage, the dominant CCS is chosen from all the CCSs
$\{ \mathcal{G}_{q^*,i} \}_{i=1}^{M_{q^*}}$ of the Ramanujan subspace $\mathcal{S}_{q^*}$,
where
\begin{align*}
    \mathcal{G}_{q^*,i}=span\{g(\omega_{q^*,i}),\overline{g(\omega_{q^*,i}})\}.
\end{align*}
Because the CCS $\mathcal{G}_{q^*,i}$ can be completely determined by only one basis function, the selection of the CCS is equivalent to choosing the basis function $g(\omega_{q^*,i})$.
According to the energy strategy, the dominant basis function is chosen for which
\begin{align}
	|\alpha_{q^*,i^*}| \geq |\alpha_{q^*,i}|, \text{ for } i=1, \cdots, M_{q^*}
\end{align}
where $\alpha_{q^*,i}$ is the complex projection coefficient of the signal $x$ onto the CCS $\mathcal{G}_{q^*,i}$.

By projecting the signal $x$ onto the CCS $\mathcal{G}_{q^*,i^*}$, the greedy strategies of both periodicity and energy are achieved simultaneously.
These strategies will be used in the following two-stage CSMP at each iteration.

\subsection{Two-stage CSMP algorithm}
\label{SEC:TWCSMP}

With the greedy strategies, the two-stage CSMP algorithm is carried out as follows.

Let the initial residual signal $r_0=x$ and $Q$ be the maximum hidden period of $x$.
Let $r_{l-1}$ denote the residual signal after $l-1$ iterations, which has already been computed in previous iterations.
In the $l$-th iteration,
the dominant hidden period $q_l$ of $r_{l-1}$ is chosen by using the periodicity strategy at the first stage.
Subsequently, the dominant CCS $\mathcal{G}_{q_l,i_l}$, characterized by the complex basis function $g(\omega_{q_l,i_l})$ of the Ramanujan subspace $\mathcal{S}_{q_l}$, is chosen with the energy strategy in the second stage.

Then, the projection of $r_{l-1}$ onto the dominant CCS $\mathcal{G}_{q_l,i_l}$, which is characterized by $g(\omega_{q_l,i_l})$, is removed to obtain the new residual signal $r_l$ in the $l$-th iteration, that is
\begin{align}
	{r}_{l}={r}_{l-1}-2 \text{Re} \left(\alpha_{q_l,i_l} g(\omega_{q_l,i_l}) \right)
\end{align}
where $\alpha_{q_l,i_l} = \langle r_{l-1}, g(\omega_{q_l,i_l})  \rangle_*$, satisfying
\begin{align}
\|{r}_{l}\|^2 = \|{r}_{l-1}\|^2 - 2 \|\alpha_{q_l,i_l}\|^2
\end{align}
For concise representation, let $\alpha_l = \alpha_{q_l,i_l}$ and $g_l=g(\omega_{q_l,i_l})$.
The signal $x$ can be represented with $L$ periodic components and the residual $r_{L+1}$ as follows
\begin{align}
	x = 2 \sum_{l=1}^L \text{Re}(\alpha_l g_l) + r_{L+1}
\end{align}
satisfying
\begin{align}
	\|x\|^2 = 2\sum_{l=1}^L \|\alpha_l\|^2 + \|r_{L+1}\|^2
\end{align}
The details of the algorithm are described in Algorithm \ref{Alg_CSMP}.

\begin{algorithm}[t]
    \caption{\small Conjugate subspace matching pursuit (CSMP)}

    \textbf{Input:} signal ${x}\in \mathbb{R}^N$ \\
    \textbf{Output:} complex projection coefficients $\alpha_1, \cdots, \alpha_L$ and complex basis functions $g_1, \cdots, g_L$

    \textbf{Initialize:} Set ${r}_0={x}$ \\
    \textbf{for} $l=1 \text{ to } L$ \textbf{do}

      \ \ \ \ \textit{Stage 1}:

      \ \ \ \ \ (1) Find the dominant hidden period of $r_{l-1}$ with the periodicity strategy
      \begin{align*}
        q_l = \underset{q \in [1,Q]}{\text{argmax}} P(x_q,q)
      \end{align*}

      \ \ \ \ \textit{Stage 2}:

      \ \ \ \ (2) Generate complex exponential functions
      $\mathbf{D}_{l}=\{ g(\omega_{q_l, i} \}_{i=1}^{M_{q_l}}$ according to $q_l$

      \ \ \ \ (3) Find the dominant CCS and projection coefficient with the energy strategy
      \begin{align*}
            g_l =& \underset{g \in \mathbf{D}_l}{\text{argmax}}
               \left\{ |\langle r_{l-1},g \rangle_*| \right\} \\
           \alpha_l =& \langle r_{l-1},g_l \rangle_*
      \end{align*}

      \ \ \ \ (4) Update residual ${r}_{l}={r}_{l-1}-2 \text{Re}(\alpha_l g_l)$

    \textbf{end} 

    \label{Alg_CSMP}

\end{algorithm}

\section{Examples and applications}
\label{SEC:EA}

In this section, we provide several examples and applications to demonstrate the abilities of our proposed two-stage CSMP algorithm.
They include hidden period identification, signal approximation, time-varying period detection, and pitch detection of speech.

\subsection{Identifying hidden periods}

Some signals are generated by the superposition of several periodic signals and hence contain hidden periods.
More generally and formally, we assume that the signal $x[n]$ of length $N$ is generated by a sum of $L$ signals with periods $q_1, \cdots, q_L$, where $N \ll lcm(q_1,\cdots, q_L)$, as described in \textit{Problem 1}.
The aim of the examples here is to illustrate the capability of the CSMP algorithm for identifying hidden periods, compared with three methods: RFT \cite{Planat2002,Planat2009}, RPT \cite{Vaidyanathan2014_2}, and EPSD \cite{Muresan2003}.

The synthetic signal $x[n]$ is composed of eight sinusoidal components and is defined by
\begin{align} \label{EQ:SYSIG}
	x[n] =\sum_{q \in \Gamma} cos \left(\frac{2\pi n}{q} \right)
\end{align}
where $\Gamma=\{ 5,12,25,26,57,58,70,85 \}$ is the set of hidden periods and $n$ is the sample index.
The CSMP and three other methods are used to identify the hidden periods of the signals of different lengths $650$ and $1950$.
Because the signal lengths $650$ and $1950$ are less than the least common multiple $127859550$ of these periods in $\Gamma$, it is difficult to directly observe the periodicity in $x[n]$, as shown in Fig. \ref{FIG:HidPSig}.
For the CSMP, the maximum hidden period $Q$ and the maximum number of iterations $L$ are set to $100$ and $20$, respectively.
For a fair comparison, the energies of these sinusoidal components are used as the references to represent the strengths of these hidden periods in $x[n]$, as shown in Fig. \ref{FIG:IdenHidPs}(a) and \ref{FIG:IdenHidPs2}(a).
The results of identifying the hidden periods with these methods are discussed in detail as follows.

\begin{figure}[t]
        \begin{minipage}[h]{1.0\linewidth}
          \centering
          \centerline{\includegraphics[width=9.5cm]{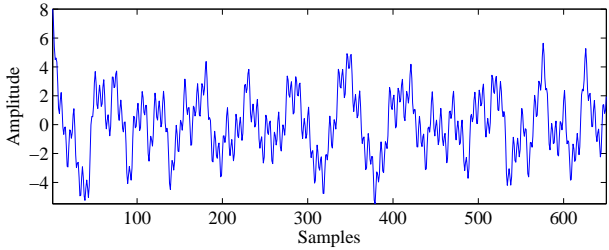}}
        \end{minipage}
        \caption{The synthetic signal of length $650$ generated by summing several sinusoidal components with the following periods: $5,12,25,26,57,58,70$, and $85$.}
        \label{FIG:HidPSig}
\end{figure}

\begin{figure}[htb]
        \begin{minipage}[h]{1.0\linewidth}
          \centering
          \centerline{\includegraphics[width=9.5cm]{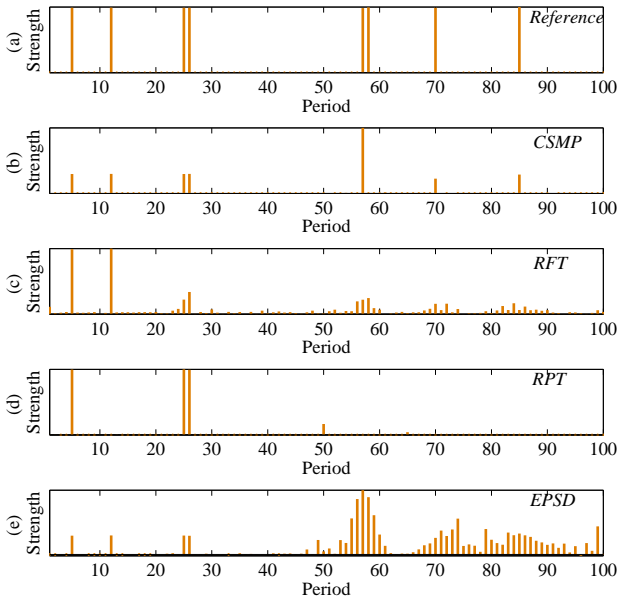}}
        \end{minipage}
        \caption{Hidden period identification for the synthetic signal of length $650$ using several approaches.}
        \label{FIG:IdenHidPs}
\end{figure}

Fig. \ref{FIG:IdenHidPs}(b) shows the results of identifying hidden periods of a signal of length $650$ with the CSMP algorithm.
The CSMP can correctly identify most of the hidden periods, but the hidden period $58$ is missing.
Because the signal length $650$ is too short, some overlap exists between the Ramanujan subspaces $\mathcal{S}_{57}$ and $\mathcal{S}_{58}$.
More specifically, for the case of the signal in (\ref{EQ:SYSIG}), there is an overlap between the CCSs $\mathcal{G}_{57,1}$ and $\mathcal{G}_{58,1}$,
where $\mathcal{G}_{57,1} \subset \mathcal{S}_{57}$ and $\mathcal{G}_{58,1} \subset \mathcal{S}_{58}$.
The CSMP therefore fails to correctly distinguish the adjacent hidden periodic components $x_{57} \in \mathcal{G}_{57,1}$ and $x_{58} \in \mathcal{G}_{58,1}$.
The periodic component $x_{58}$ is captured by the CCS $\mathcal{G}_{57,1}$, and hence, the hidden period $57$ in Fig. \ref{FIG:IdenHidPs}(b) achieves more strength than the other hidden periods.
However, the CSMP can correctly identify all these hidden periods including $57$ and $58$ of the signal in (\ref{EQ:SYSIG}) of length $1950$, as shown in Fig. \ref{FIG:IdenHidPs2} (b) because the overlap between the CCSs $\mathcal{G}_{57,1}$ and $\mathcal{G}_{58,1}$ decreases when the signal length increases; hence, the CSMP can achieve perfect identification results compared with the reference in Fig. \ref{FIG:IdenHidPs2}(a).
Although the hidden periods $25$ and $26$ are also adjacent hidden periods, the signal length $650$ is long enough for the CSMP to correctly distinguish them.
In general, a larger hidden period requires a longer signal length to eliminate the overlap between the corresponding CCS and other CCSs.

\begin{figure}[t]
        \begin{minipage}[h]{1.0\linewidth}
          \centering
          \centerline{\includegraphics[width=9.5cm]{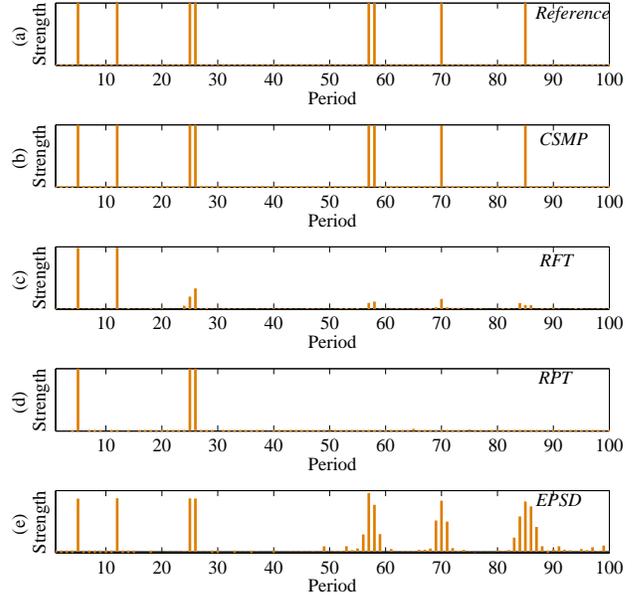}}
        \end{minipage}
        \caption{Hidden period identification for the synthetic signal of length $1950$ using several approaches.}
        \label{FIG:IdenHidPs2}
\end{figure}

Fig. \ref{FIG:IdenHidPs}(c) and Fig. \ref{FIG:IdenHidPs2}(c) show the results of identifying hidden periods with the RFT algorithm.
The RFT explores the periodic structure of the signal $x[n]$ by projecting it onto each Ramanujan sum $c_q[n]$, for $q=1,\cdots, 100$.
The square values of these projection coefficients are used to represent the strength of the hidden periods.
Unfortunately, the Ramanujan sums $c_q[n]$ cannot represent the whole Ramanujan subspace $\mathcal{S}_q$,
as $c_q[n]$ is only one of the $\phi(q)$ basis vectors of $\mathcal{S}_q$ (see \cite{Vaidyanathan2014_2} for more details).
In both Fig. \ref{FIG:IdenHidPs}(c) and Fig. \ref{FIG:IdenHidPs2}(c), the smaller hidden periods $\{5,12,25,26\}$ can be identified by the RFT to a certain extent, as the signal lengths $650$ and $1950$ are long enough to eliminate the overlaps among these `smaller' Ramanujan subspaces.
In fact, the Ramanujan subspaces $\mathcal{S}_{5},\mathcal{S}_{25},\mathcal{S}_{26}$ are manually orthogonal, as $5,25,26$ are divisors of the lengths $650$ and $1950$.
However, the RFT only captures part of the strengths of the larger hidden periods $57,58,70,85$ of the signals of lengths $650$ and $1950$, due to the overlaps among these larger Ramanujan subspaces.
In addition, many false periods are found around the true hidden periods in Fig. \ref{FIG:IdenHidPs}(c) because the Ramanujan subspaces associated with these periods are not mutually orthogonal for a finite signal.
When a signal $x[n]$ of longer length $1950$ is used to identify the hidden periods with the RPT, as shown in Fig. \ref{FIG:IdenHidPs2}(c), the strengths of the false periods decrease because the overlaps among these Ramanujan subspaces also decrease.
In fact, the RFT can be viewed as a special case of the EPSD algorithm for identifying the hidden periods, and the only difference between them is that the RFT can only capture the Ramanujan subspace $\mathcal{S}_q$ in a single dimension but the EPSD can capture the whole $\mathcal{S}_q$.

Fig. \ref{FIG:IdenHidPs}(d) and \ref{FIG:IdenHidPs2}(d) show the results of identifying hidden periods using the RPT algorithm.
In the RPT, the signal $x[n]$ is projected onto a series of orthogonal Ramanujan subspaces, and the energies of the projections of $x[n]$ into these subspaces are used to represent the strengths of the hidden periods.
However, the RPT only generates the subspace $\mathcal{S}_q$ where the period $q$ is just the divisor of the signal length.
This means that the RPT only identify the periods that are the divisors of the signal length.
Specifically, as the hidden periods $5,25$ and $26$ are divisors of the signal lengths $650$ and $1950$, respectively, they can be correctly identified by the RPT as shown in Fig. \ref{FIG:IdenHidPs}(d) and Fig. \ref{FIG:IdenHidPs2}(d).
The Ramanujan subspace $\mathcal{S}_{50}$, where the period $50$ is also a divisor of the signal length $650$, has some overlaps with other subspaces.
This results in some signal components being captured by $\mathcal{S}_{50}$, and hence, the false hidden period $50$ can be found in Fig. \ref{FIG:IdenHidPs}(d).
The overlap between $\mathcal{S}_{50}$ and other Ramanujan subspaces decreases when the signal length increases,
and hence, the false hidden period $50$ can be found in Fig. \ref{FIG:IdenHidPs2}(d).
Because only the Ramanujan subspaces whose periods are the divisors of the signal length are mutually orthogonal,
the EPSD can correctly identify the hidden periods $5,25,26$.
The hidden period $12$ is a smaller period compared with the signal signal length $650$ and $1950$ and can also be identified by the EPSD.
However, the other Ramanujan subspaces, $\mathcal{S}_{57},\mathcal{S}_{58},\mathcal{S}_{70}$, and $\mathcal{S}_{85}$, show some overlaps, and hence, many false hidden periods are found in Fig. \ref{FIG:IdenHidPs}(e) and \ref{FIG:IdenHidPs2}(e) around these periods.
When the signal length increases, the overlaps among these Ramanujan subspaces decrease and the false hidden periods also decrease, as shown in Fig. \ref{FIG:IdenHidPs2}(e).

In summary, due to the limited length of the signal, overlaps exist among the Ramanujan subspaces, and many methods fail to correctly identify hidden periods of the signal, such as RFT, RPT and EPSD.
Compared with these methods, the CSMP can identify most hidden periods of the signal and achieves better performance, which is attributed to its periodicity and energy strategies and the representation of the Ramanujan subspace with the CCSs.

\subsection{Signal approximation}

With the CSMP algorithm, a signal can be approximated by a sum of periodic components.
Assuming that $Q$ is the maximum period of the periodic components of the signal $x$,
a series of periodic components with period $q \in [1,Q]$ are selected to approximate the signal $x$.
These periodic components can adaptively capture the periodic structure of the signal.
The CSMP can quickly achieve convergence if the decomposed signal contains an obvious periodic structure, whereas the convergence speed of the CSMP is relatively slow if there is no obvious periodic structure in the signal.

\begin{figure}[t]
        \begin{minipage}[h]{1.0\linewidth}
          \centering
          \centerline{\includegraphics[width=9.5cm]{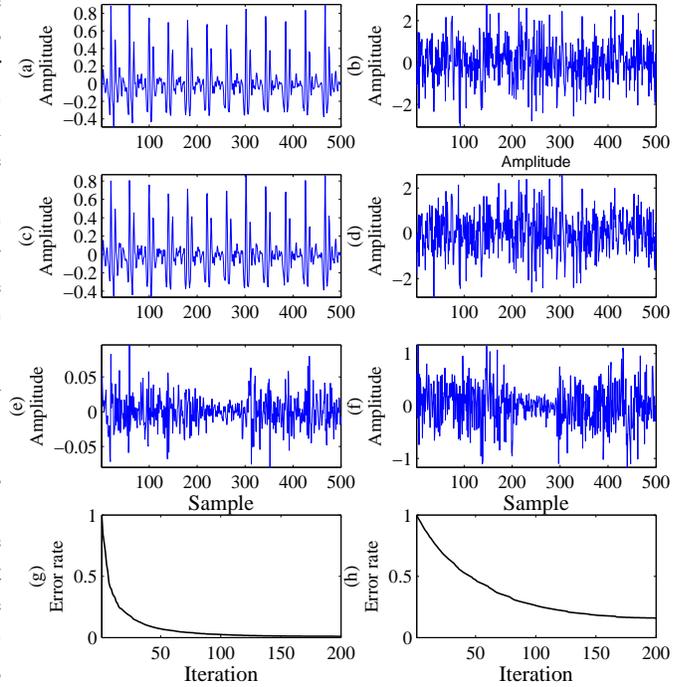}}
        \end{minipage}
        \caption{Approximation of the speech and white noise with CSMP. (a) Speech; (b) White noise; (c) Approximation of the speech in (a); (d) Approximation of the white noise in (b); (e) Residual from approximating the speech; (f) Residual from approximating the white noise; (g) Error rate for the speech approximation; (h) Error rate for the noise approximation.}
        \label{FIG:Dec_Speech_Noise}
\end{figure}

Fig. \ref{FIG:Dec_Speech_Noise} shows examples of approximating speech in Fig. \ref{FIG:Dec_Speech_Noise}(a) and white noise in Fig. \ref{FIG:Dec_Speech_Noise}(b) by using the CSMP algorithm.
In the CSMP, the maximum period $Q$ and the total number $L$ of iterations are set to $300$ and $200$, respectively.
The approximation and residue of the speech are shown in Fig. \ref{FIG:Dec_Speech_Noise}(c) and (e), respectively.
Compared with the approximation, the residual signal has a smaller amplitude, which means that most of the speech can be characterized by periodic components with period $q \in [1,300]$.
The final approximation error of the speech rate is $0.010$.
For the white noise, however, its residual signal shown in Fig. \ref{FIG:Dec_Speech_Noise}(f) is relatively large compared to its approximation, as shown in Fig. \ref{FIG:Dec_Speech_Noise}(d), and its approximate error rate is $0.266$.
Fig. \ref{FIG:Dec_Speech_Noise}(g) and (h) show that the error rates associated with approximating the speech and white noise decrease as the number of iterations increases.
The approximation of the speech can quickly achieve convergence after almost $100$ iterations,but the approximation of the white noise is relatively slow because there is an obvious periodic structure in the speech but not in the white noise.
Moreover, the approximation error rate cannot further decrease even if the number of iterations increases, as the maximum period $300$ is less than that of both signals.
We will present another example to explain this problem.

\begin{figure}[t]
        \begin{minipage}[h]{1.0\linewidth}
          \centering
          \centerline{\includegraphics[width=9.5cm]{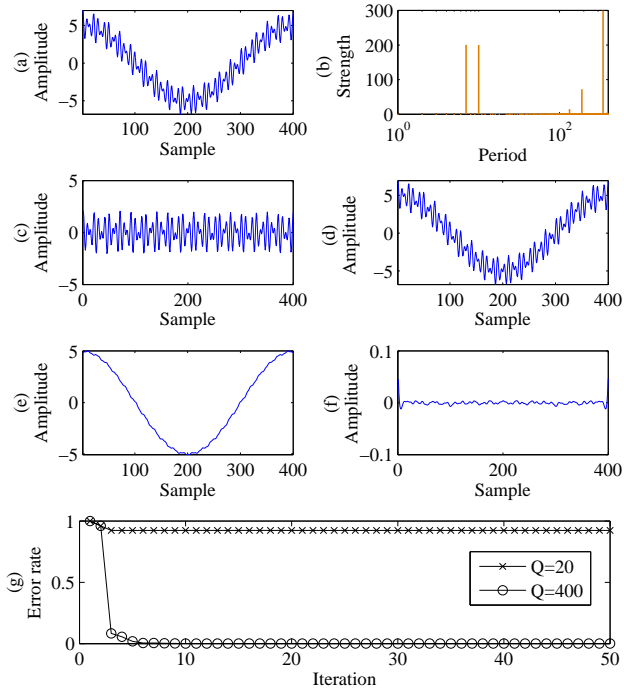}}
        \end{minipage}
        \caption{The synthetic signal of length $400$ with hidden periods $7,10$ and larger periods is approximated with different maximum periods $Q$.
(a) Synthetic signal; (b) Strength of the hidden periods; (c) Approximation with the maximum period $Q=20$; (d) Approximation with the maximum period $Q=400$; (e) Residual of the approximation in (c); (f) Residual of the approximation in (d); (g) The error rates of two approximations with $Q=20$ and $Q=400$.}
        \label{FIG:Maximum_Q}
\end{figure}

A synthetic signal of length $400$ is generated by superimposing several periodic signal with periods $7,10$ and other larger periods, as shown in Fig. \ref{FIG:Maximum_Q}(a).
The strengths of these hidden periods are shown in Fig. \ref{FIG:Maximum_Q}(b) in logarithmic coordinates.
The synthetic signal is approximated with different maximum periods $Q$, $20$ and $400$.
The final error rates for the different maximum periods are $0.923$ and $1.477\text{E-}6$.
The residual signals of the approximations with different maximum periods are shown in Fig. \ref{FIG:Maximum_Q}(e) and (f).
For the case of $Q=20$, the residue in Fig. \ref{FIG:Maximum_Q}(e) contains hidden periodic components whose periods far greater than $20$.
These periodic components cannot be approximated by the signals with periods $q \in [1,20]$, even if the number of iterations increases, as shown in Fig. \ref{FIG:Maximum_Q}(g).
However, the signal in Fig. \ref{FIG:Maximum_Q}(a) can be approximated well for the case $Q=200$, as shown in Fig. \ref{FIG:Maximum_Q}(d), and its residue in Fig. \ref{FIG:Maximum_Q}(f).
Therefore, the maximum period $Q$ must be greater than the maximum period $Q$ of the periodic components so that a good approximation is achieved.

\subsection{Tracing a time-varying period with shifted CSMP}

The periodicity of some signals varies with time, such as in an inverse chirp signal, speech and music.
The periods of these signals can change with time or be present for a short duration.
To detect the periodicity in these signals, the CSMP can be performed in a shifted rectangular window over the signal, which is referred to as shifted CSMP, similar to the short-time Fourier Transforms.
Before performing the shifted CSMP, a suitable window size $L$ needs to be chosen.
Assuming that the maximum period $Q$ of the signal in the shifted window is known, the window size $L$ must satisfy $L>Q$.
We present two examples to demonstrate the capability of our method for tracing a time-varying period.

\begin{figure}[t]
        \begin{minipage}[h]{1.0\linewidth}
          \centering
          \centerline{\includegraphics[width=9.5cm]{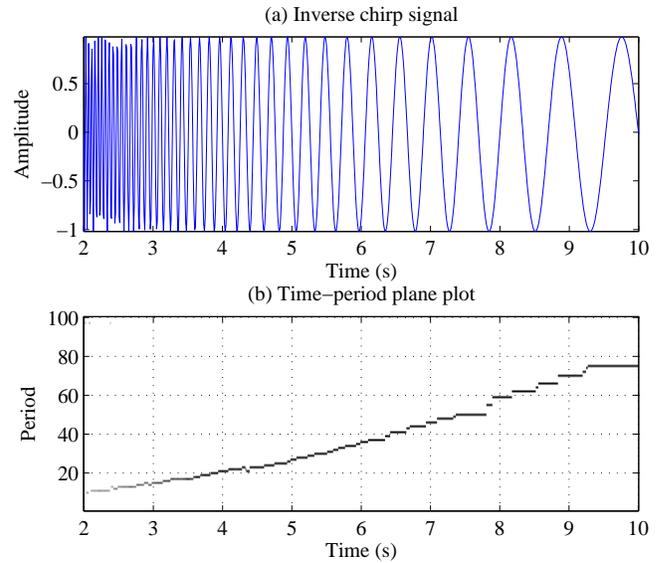}}
        \end{minipage}
        \caption{Period detection for an inverse chirp signal. (a) Inverse chirp signal; (b) Time-period plane using the shifted CSMP.}
        \label{FIG:IChirp}
\end{figure}

\begin{figure}[b]
        \begin{minipage}[h]{1.0\linewidth}
          \centering
          \centerline{\includegraphics[width=9.5cm]{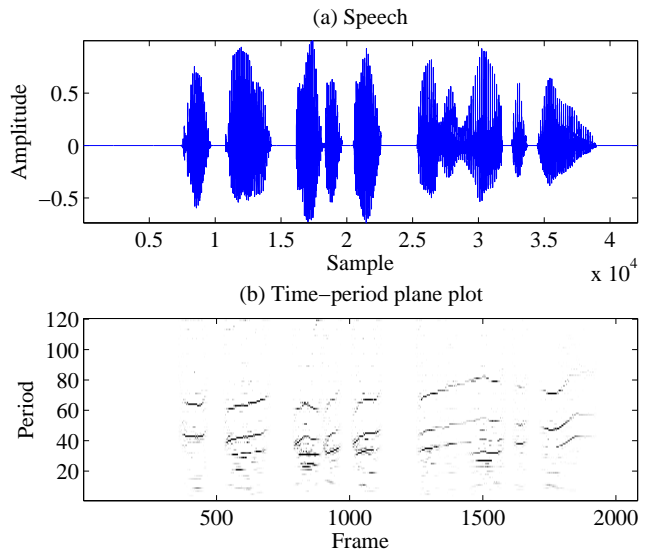}}
        \end{minipage}
        \caption{Period detection for speech. (a) Speech; (b) Time-period plane in the frame index using the shifted CSMP.}
        \label{FIG:Speech}
\end{figure}

In the first example, the inverse chirp signal \cite{Tenneti2015P, Vaidyanathan2015E} is defined by
\begin{align*}
	x(t)=sin(1/at),\text{ }  t \in [2,10]
\end{align*}
where $a=0.01/2 \pi$.
The discrete inverse chirp signal $x[n]$ is sampled from $x(t)$ every $0.01$ s, as shown in Fig. \ref{FIG:IChirp}(a).
In the shifted CSMP, the maximum period $Q$ and the window size $W$ are set to $100$ and $150$, respectively.
In Fig. \ref{FIG:IChirp}(b), the time-period plane shows the result of tracking the period of the inverse chirp signal with the shifted CSMP.
The figure clearly shows that the period of the signal varies with time from $10$ to $75$ (corresponding to the time period range from $0.10$ to $0.75$ s).
The example illustrates that the shifted CSMP can correctly trace a period that varies with time, even with a fixed-length window.

The speech has more complex periodic behaviors.
Multiple pitch detection for speech is the core of a broad range of applications \cite{Roux2007}.
In the second example, a clean speech of length $2.6$ s with a sample rate of $8000$ Hz is considered, as shown in Fig. \ref{FIG:Speech}(a).
In the shifted CSMP, the maximum period $Q$ and the window size $W$ are set to $120$ and $512$, respectively.
The number of iterations of the CSMP in each shifted window is set to $10$, which means that $10$ dominant periodic components are chosen in each window.
The time-period plane in Fig. \ref{FIG:Speech}(b) shows the result of tracking periods of the speech in the frame index.
There are obvious periodic structures (pitches) when the voice is active in the speech that vary with time.

\section{Conclusion}
\label{SEC:CON}

In this paper, we presented a new method called CSMP to exploit the periodicity of signals and their decomposition according to the periods.
Apart from identifying the hidden periods from signals, the CSMP is capable of detecting the changes of the period with time.
More generally, the CSMP can decompose any signal into a series of periodic components and residuals, which provides another view for exploiting the structure of the signal that is different from traditional frequency analysis.
Similar to the time-frequency representation, the shifted CSMP also presents the representation of signals in the time-period plane, which completely reveal the hidden periods of the signal that vary with time.
Different from the traditional method based on a greedy strategy, the CSMP can be effectively performed over CCSs in two stages without using the dictionary.
In the first stage, the dominant hidden period is estimated based on the periodic metric of the signal in each Ramanujan subspace, based on which the dominant CCS is chosen for updating the signal.
Finally, we applied the CSMP to several examples for hidden period identification, signal approximation and pitch detection in speech to illustrate the effectiveness of the proposed method.

\section* {Appendix}

\subsection{Orthogonal projection in the complex conjugate subspace}
\label{SEG:APP-A}

Let $\mathcal{G}$ be the conjugate subspace spanned by $\{g,\bar{g}\}$, i.e., $\mathcal{G}=span\{g,\bar{g}\}$, and the matrix $G=[g,\bar{g}]$.
Then, the complex projection coefficients of the real signal $x$ into $\mathcal{G}$, $\pmb{\alpha}=[\alpha_1, \alpha_2]^T$, can be calculated by
\begin{align*}
	\pmb{\alpha} &= (G^H G)^{-1} G^H x
\end{align*}
Because
\begin{align*}
	(G^H G)^{-1}=\frac{1}{1-|\langle g,\bar{g} \rangle|^2} \left[
            \begin{array}{cc}
                1,& - \langle g,g^* \rangle \\
                -\langle g,g^* \rangle,&   1
            \end{array}
            \right]
\end{align*}
and then, we have
\begin{align*}
    \pmb{\alpha} = \left[
        \begin{array}{c}
            \alpha_1 \\
            \alpha_2
        \end{array}
     \right]
        = \frac{1}{1-|\langle g,\bar{g} \rangle|^2} \left[
            \begin{array}{c}
                \langle g,x \rangle - \langle g,\bar{g} \rangle \langle \bar{g},x \rangle \\
                \langle \bar{g},x \rangle - \langle g,\bar{g} \rangle \langle g,x \rangle
            \end{array}
            \right]
\end{align*}
Note that as $x$ is the real signal, the two complex projection coefficients in the above equation are a complex pair, that is,
\begin{align*}
    \alpha_1 = \overline{\alpha}_2
\end{align*}
Thus, the projection $x_{\mathcal{G}}$ of $x$ into the conjugate subspace $\mathcal{G}$ is
\begin{align*}
	x_{\mathcal{G}} &= G(G^H G)^{-1} G^H x \\
		&= G \pmb{\alpha} \\
		&= \alpha_1 g + \alpha_2 \bar{g} \\
		&=  \alpha_1 g + \bar{\alpha}_1 \bar{g} \\
		&= 2 \text{Re}(\alpha g)
\end{align*}
where $\alpha=\alpha_1=\bar{\alpha}_2$ and
\begin{align*}
    \alpha = \frac{1}{1-|\langle g,\bar{g} \rangle|^2} \left(\langle g,x \rangle - \langle g,\bar{g} \rangle \langle \bar{g},x \rangle \right)
\end{align*}
Moreover, we have
\begin{align*}
	\|x_{\mathcal{G}}\|^2 = \|\pmb{\alpha}\|^2 = |\alpha_1|^2+|\alpha_2|^2=2|\alpha|^2
\end{align*}

\ifCLASSOPTIONcompsoc
  \section*{Acknowledgments}
\else
  \section*{Acknowledgment}
\fi
 This work was supported in part by the Major Research plan of the National Natural Science Foundation of China (No. 91120303), National Natural Science Foundation of China (No. 91220301), Natural Science Foundation of Heilongjiang Province of China (No. F2015012), and Academic Core Funding of Young Projects of Harbin Normal University of China (No. KGB201225).

\ifCLASSOPTIONcaptionsoff
  \newpage
\fi



%

%

\begin{IEEEbiography}{Deng Shi-Wen}
Shiwen Deng received a B.E degree from the Institute of Technology from Jia Mu Si University, JiaMuSi, China, in 1997, an M.E from The School of Computer Science from Harbin Normal University, Harbin, China, in 2005, and a Ph.D from the school of Computer Science from the Harbin Institute of Technology in 2012. Currently, he is with the School of Mathematical Sciences, Harbin Normal University, Harbin, China. His research interests lie in the areas of speech and audio signal processing, including content-based audio analysis, noise suppression, and speech/audio classification/detection.
\end{IEEEbiography}

\begin{IEEEbiography}{Han Ji-Qing}
Jiqing Han received B.S. and M.S. degrees in electrical engineering and a Ph.D. in computer science from the Harbin Institute of Technology, Harbin, China, in 1987, 1990, and 1998, respectively. Currently, he is the associate dean of the school of Computer Science and Technology, Harbin Institute of Technology. He is a member of IEEE, a member of the editorial board of the Journal of Chinese Information Processing, and a member of the editorial board of the Journal of Data Acquisition and Processing. Prof. Han is undertaking several projects with the National Natural Science Foundation, 863Hi-tech Program, National Basic Research Program. He has won three Second Prize and two Third Prize awards in Science and Technology from the Ministry/Province. He has published more than 100 papers and 2 books. His research fields of expertise include speech signal processing and audio information processing.
\end{IEEEbiography}





\end{document}